\setlist{nolistsep}
\newcolumntype{L}[1]{>{\raggedright\let\newline\\\arraybackslash\hspace{0pt}}m{#1}}
\newcolumntype{C}[1]{>{\centering\let\newline\\\arraybackslash\hspace{0pt}}m{#1}}
\newcolumntype{R}[1]{>{\raggedleft\let\newline\\\arraybackslash\hspace{0pt}}m{#1}}
\begin{document}

\mainmatter

\title{Fast Simulation of Probabilistic Boolean Networks (Technical Report)}

\titlerunning{Fast Simulation of Probabilistic Boolean Networks}

\author{Andrzej Mizera \and Jun Pang \and Qixia Yuan
\thanks{Supported
by the National Research Fund, Luxembourg (grant 7814267). }
}
\authorrunning{Mizera, Pang and Yuan}

\institute{
Computer Science and Communications, University of Luxembourg, Luxembourg\\
\email{firstname.lastname@uni.lu}
}


\maketitle

\begin{abstract}
Probabilistic Boolean networks (PBNs) is an~important mathematical framework widely used for
modelling and analysing biological systems.
PBNs are suited
for modelling large biological systems, which more and more often arise in systems biology.
However, the large system size poses a~significant challenge to the analysis of PBNs, in
particular, to the crucial analysis of their steady-state behaviour.
Numerical methods for performing steady-state analyses suffer from the state-space explosion
problem, which makes the utilisation of statistical methods the only viable approach. However,
such methods require long simulations of PBNs, rendering the simulation speed a~crucial efficiency factor.
For large PBNs and high estimation precision requirements, a~slow simulation
speed becomes an~obstacle. In this paper, we propose a~structure-based method for fast
simulation of PBNs. This method first performs a~network reduction operation and then divides
nodes into groups for parallel simulation. Experimental results show that our method can lead to
an~approximately 10 times speedup for computing steady-state probabilities of a~real-life biological network.

\end{abstract}

\section{Introduction}
\label{sec:intro}
Systems biology aims to model and analyse biological systems from a~holistic perspective in order
to provide a~comprehensive, system-level understanding of cellular behaviour. Computational
modelling of a~biological system plays a~key role in systems biology. It connects the field of
traditional biology with mathematics and computational science, providing a~way to organize and
formalize available biological knowledge in a~mathematical model and to identify missing
biological information using formal means. Together with biochemical techniques, computational
modelling promotes the holistic understanding of real-life biological systems, leading to the
study of large biological systems. This brings a significant challenge to computational modelling
in terms of the state-space size of the system under study. Among the existing modelling
frameworks, probabilistic Boolean networks (PBNs) is well-suited for modelling large-size
biological systems. It is first introduced by Shmulevich et al.~\cite{SD10,TMPTSS13} as
a~ probabilistic generalisation of the standard Boolean networks (BNs) to model gene regulatory
networks (GRNs). The framework of PBNs incorporates rule-based dependencies between genes and
allows the systematic study of global network dynamics; meanwhile, it is capable of dealing with
uncertainty, which naturally occurs at different levels in the study of biological systems.


Focusing on the wiring of a~network, PBNs is essentially designed for revealing the long-run
(steady-state) behaviour of a~biological system. Comprehensive understanding of the long-run
behaviour is vital in many contexts. For example, attractors of a~gene regulatory network (GRN)
are considered to characterise cellular phenotypes~\cite{Kauffman69a}.
There have been a~lot of studies in analysing the long-run behaviour of biological systems for
better understanding the influences of genes or molecules in the systems~\cite{SDZ02}.
Moreover, steady-state analyses have been used in gene intervention and external
control~\cite{SDZ02Gene,ADZ02}, which is of special interest to cancer therapists to predict the
potential reaction of a~patient to treatment. In the context of PBNs, many efforts have been
devoted to computing their steady-state probabilities.
In~\cite{SGH03,TMPTS14}, efficient numerical methods are provided for computing the steady-state
probabilities of small-size PBNs. Those methods utilise an~important characteristics of PBNs,
i.e., a~PBN can be viewed as a~discrete-time Markov chain (DTMC) and its dynamics can be studied
with the use of the rich theory of DTMCs. The key idea of those methods relies on the computation
of the transition matrix of the underlying DTMC of the studied PBN. They perform well for
small-size PBNs. However, in the case of large-size PBNs, the state-space size becomes so huge
that the numerical methods are not scalable any more.

Many efforts are then spent on addressing the challenge of the huge state-space in large-size
PBNs.  In fact, the use of statistical methods and Monte Carlo methods remain the only feasible
approach to address the problem. In those methods, the simulation speed is an~important factor in
the performance of these approaches. For large PBNs and long trajectories, a~slow simulation speed
could render these methods infeasible as well. In our previous work~\cite{MPY15}, we have
considered the two-state Markov chain approach and the Skart method for approximate analysis of
large PBNs. Taking special care of efficient simulation, we have implemented these two methods in
the tool \textsf{ASSA-PBN}~\cite{assa} and successfully used it for the analysis of large PBNs
with a~few thousands of nodes. However, the required time cost is still expected to be reduced.
This requirement is of great importance for the construction of a~model, e.g., parameter
estimation, and for a~more precise and deep analysis of the system. In this work, we propose
a~structure-based method to speed up the simulation process. The method is based on analysing the
structure of a~PBN and consists of two key ideas: first, it removes the unnecessary nodes in the
network to reduce its size; secondly, it divides the nodes into groups and performs simulation for
nodes in a~group simultaneously. We show with experiments that our structure-based method can
significantly reduce the computation time for approximate steady-state analyses of large PBNs.
To the best of our knowledge, our proposed method is the first one to apply structure-based
analyses for speeding up the simulation of a~PBN.

\medskip\noindent
{\bf Structure of the paper.}
After presenting preliminaries in Section~\ref{sec:pre}, we describe our structure-based simulation method in Section~\ref{sec:method}.
We perform an~extensive evaluation and comparison of our method with the previous state-of-art methods in
Section~\ref{sec:evaluation} on a~large number of randomly generated PBNs and a~large real-life PBN model of apoptosis in hepatocytes.
We conclude our paper with some discussions in Section~\ref{sec:conclusion}.

\section{Preliminaries}
\label{sec:pre}

\subsection{Probabilistic Boolean networks (PBNs)}
\label{ssec:pbn}
A PBN $G(X,F)$ models elements of a biological system with a set of binary-valued nodes $X=\{x_1, x_2,\dots, x_n\}$.
For each node $x_i \in X$, the update of its value is guided by a set of \emph{predictor functions} $F_i=\{f_1^{(i)}, f_2^{(i)}, \dots, f_{\ell(i)}^{(i)}\}$,
where $\ell(i)$ is the number of predictor functions for node $x_i$.
Each $f_j^{(i)}$ is a~Boolean function whose inputs are a~subset of nodes, referred to as \emph{parent nodes} of $x_i$.
For each node $x_i$, one of its predictor functions will be selected to update the value of $x_i$ at each time point $t$.
This selection is in accordance with a~probability distribution
$C_i=(c_1^{(i)},c_2^{(i)},\dots,c_{\ell(i)}^{(i)})$, where the individual probabilities are
the \emph{selection probabilities} for the respective elements of $F_i$ and they sum to 1.
The value of node $x_i$ at time point $t$ is denoted as $x_i(t)$
and the state of the PBN at time point $t$ is denoted as $s(t)=(x_1(t),x_2(t),\dots,x_n(t)).$
The state space of the PBN is $S=\{0,1\}^n$ and it is of size $2^n$.
There are several variants of PBNs with respect to the selection of predictor functions and the
synchronisation of nodes update.
In this paper, we consider the \textit{independent synchronous} PBNs,
i.e., the choice of predictor functions for each node is made independently
and the values of all the nodes are updated synchronously.
The transition from state $s(t)$ to state $s(t+1)$ is performed by
randomly selecting a predictor function for each node $x_i$ from $F_i$
and by applying those selected predictor functions to update the values of all the nodes
synchronously.
We denote $f(t)$ the combination of all the selected predictor functions at time point $t$.
The transition of state $s(t)$ to $s(t+1)$ can then be denoted as $s(t+1)=f(t)(s(t)).$

Perturbations of a biological system are introduced by a perturbation rate $p \in (0,1)$ in a PBN.
The dynamics of a PBN is guided with both perturbations and predictor functions:
at each time point $t$, the value of each node $x_i$ is flipped with probability $p$;
and if no flip happens, the value of each node $x_i$ is updated with selected predictor functions synchronously.
Let $\boldsymbol{\gamma}(t)=(\gamma_{1}(t),\gamma_{2}(t),\dots,\gamma_{n}(t))$, where $\gamma_{i}(t)
\in \{0,1\}$ and $\mathbb{P}(\gamma_{i}(t)=1)=p$ for all $t$ and $i \in \{1,2,\dots,n\}$.
The transition of $s(t)$ to $s(t+1)$ in PBNs with perturbations is given by
\begin{equation}
s(t+1)=\begin{cases}
    s(t) \oplus \gamma(t)       & \quad \text{if } \gamma(t) \neq0\\
    f(t)(s(t)) & \quad \text{otherwise,}\\
  \end{cases}
  \label{equ:s}
\end{equation}
where $\oplus$ is the element-wise exclusive or operator for vectors.
According to Equation~\eqref{equ:s},
perturbations allow the system to move from a~state to any other
state in one transition, hence render the underlying Markov chain irreducible and aperiodic.
Thus, the dynamics of a~PBN with perturbations can be viewed as an~ergodic
DTMC~\cite{SD10}.
Based on the ergodic theory,
the long-run dynamics of a~PBN with perturbations
is governed by a~unique limiting distribution,
convergence to which is independent of the choice of the initial state.

The density of a~PBN is measured with its predictor function number and parent nodes number. For
a~PBN $G$, its density is defined as $\mathcal{D}(G)=\frac{1}{n}\sum_{i=1}^{\it N_F}\phi(i)$,
where $n$ is the number of nodes in $G$, ${\it N_F}$ is the total number of predictor functions in
$G$, and $\phi(i)$ is the number of parent nodes for the $i$th predictor function.

\subsection{Simulating a PBN}
\label{ssec:sim}
A PBN can be simulated via two steps based on its definition. First, perturbation is verified for
each individual node and a~node value is flipped if there is a~perturbation. Second, if no
perturbation happens for any of the nodes, the network state is updated by selecting predictor
functions for all the nodes and applying them. For efficiency reason, the selection of predictor
functions for each node $x_i$ is performed with the \emph{alias method}~\cite{WAJ77}, which allows
to make a~selection among choices in constant time irrespective of the number of choices. The
alias method requires the construction of an~\textit{alias table} of size proportional to the
number of choices, based on the selection probabilities of $C_i$.


\section{Structure-based Parallelisation}
\label{sec:method}
The simulation method described in the above section requires to check perturbations,
make a selection and perform updating a node for $n$ times in each step.
In the case of large PBNs and huge trajectory (sample) size,
the simulation time cost can become prohibitive.
Intuitively,  the simulation time can be reduced if the $n$-time operations can be speeded up,
for which we propose two solutions.
One is to perform \emph{network reduction} such that the total number of nodes is reduced.
The other is to preform \emph{node-grouping} in order to parallelise the process for checking perturbations, making selections, and updating nodes.
For the first solution, we analyse the PBN structure to identify those nodes that can be removed
and remove them to reduce the network size;
while for the second solution, we analyse the PBN structure to divide nodes into groups and perform
the operations for nodes in a group simultaneously.
We combine the two solutions together
and refer to this simulation technique as \textit{structure-based parallelisation}.
We formalise the two solutions in the following three steps:
the first solution is described in step 1 and the second solution is described in steps 2 and 3.

\smallskip
\begin{itemize}[leftmargin=1.0in]
\item[Step 1.] Remove unnecessary nodes from the PBN.
\item[Step 2.] Parallelise the perturbation process.
\item[Step 3.] Parallelise updating a PBN state with predictor functions.
\end{itemize}
\smallskip
We describe these three steps in the following subsections.

\subsection{Removing unnecessary nodes}
\label{ssec:leaves}
We first identify those nodes that can be removed and preform network reudction.
When simulating a PBN without perturbations,
if a node does not affect any other node in the PBN,
the states of all other nodes will not be affected after removing this node.
If this node is not of interest of the analysis, e.g., we are not interested in analysing its steady-state,
then this node is dispensable in a PBN without perturbations.
We refer to such a dispensable node as a leaf node in a PBN and define it as follow:
\begin{definition}[Leaf node]
A~node in a~PBN is a~\emph{leaf node} (or \emph{leaf} for short) if and only if either (1) it is not of interest and has
no child nodes or (2) it is not of interest and has no other children after iteratively removing all its child nodes
which are leaf nodes.
\end{definition}
According to the above definition,
leaf nodes can be simply removed without affecting the simulation of the remaining nodes
in a PBN without perturbations.
In the case of a PBN with perturbations,
perturbations in the leaf nodes need to be considered.
Updating states with Boolean functions will only be performed when
there is no perturbation in both the leaf nodes and the non-leaf nodes.
Perturbations of the leaf nodes can be checked in constant time irrespective of the number of leaf nodes
as describe in Algorithm~\ref{alg:checkleave}.
The input $t$ in this algorithm is the probability that no perturbation happens in all the leaf nodes.
It can be computed easily as $t=(1-p)^\ell$, where $p$ is the perturbation rate for each node and $\ell$ is the number of leaf nodes in the PBN.
With the consideration of their perturbations,
the leaf nodes can be removed without affecting the simulation of the non-leaf nodes
in a PBN with perturbations as well.
Since the leaves are not of interest, results of analyses performed on the simulated trajectories
of the reduced network, i.e.,. containing only non-leaf nodes, will be the same as performed on
trajectories of the original network, i.e., containing all the nodes.

\begin{algorithm}[!t]
\caption{Checking perturbations  of leaf nodes in a PBN}
\label{alg:checkleave}
\begin{algorithmic}[1]
\Procedure{CheckLeafNodes}{$t$}
	\If {$rand()>t$} return ${\it true}$;
	\Else~~return ${\it false}$;
	\EndIf
\EndProcedure
\end{algorithmic}
\end{algorithm}

\subsection{Performing perturbations in parallel}
\label{ssec:pip}
The second step of our method speeds up the process for determining perturbations.
Normally, perturbations are checked for nodes one by one.
In order to speed up the simulation of a PBN, we perform perturbations for $k$ nodes simultaneously instead of one by one.
For those $k$ nodes,
there are $2^k$ different perturbation situations.
We compute the probability for each situation and construct an alias table based on the distribution.
With the alias table, we make a choice $c$ among $2^k$ choices
and perturb the corresponding nodes based on the choice.
The choice $c$ is an integer in $[0, 2^k)$
and for the whole network the perturbation can then be performed $k$ nodes by $k$ nodes using the logical bitwise \textit{exclusive or} operation which outputs true only when inputs differ.
To save memory, the alias table can be reused for all the groups since the perturbation rate for each node is the same.
It might happen that the number of nodes in the last perturbation round will be less than $k$ nodes.
Assume there is $k'$ nodes in the last round and $k'< k$.
For those $k'$ nodes, we can reuse the same alias table to make the selection in order to save memory.
After getting the choice $c$, we perform $c=c\&m$, where $\&$ is a~bitwise \textit{and} operation
and $m$ is a~mask constructed by setting the first $k'$ bits of $m$'s binary representation to $1$
and the remaining bits to $0$.
\begin{theorem}
\label{theo:per}
The above process for determining perturbations for the last $k'$ nodes
guarantees that the probability for each of the $k'$ nodes to be perturbed is still $p$.
\end{theorem}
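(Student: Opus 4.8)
The plan is to compute, for an arbitrary fixed node among the $k'$ nodes in the last round, the probability that it ends up flipped after the masking operation, and to verify this equals $p$. First I would fix an index $i \in \{0,1,\dots,k'-1\}$ and recall how the choice $c$ is generated: the alias table is built on the full distribution over $2^k$ perturbation patterns, where pattern $c$ has probability $(1-p)^{k-w(c)}\,p^{\,w(c)}$ with $w(c)$ the Hamming weight of the $k$-bit string $c$. Equivalently — and this is the cleanest way to organize the argument — sampling $c$ from this distribution is the same as drawing $k$ independent bits $b_0,\dots,b_{k-1}$ with $\mathbb{P}(b_j=1)=p$ and setting $c=\sum_j b_j 2^j$. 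After the masking step $c \leftarrow c \mathbin{\&} m$, where $m$ has its low $k'$ bits set, bit $i$ of the result is $b_i$ if $i<k'$ and $0$ otherwise. Hence the $i$-th of the $k'$ nodes is perturbed precisely when $b_i=1$, an event of probability $p$, which is what the theorem claims.

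The key steps, in order, are: (1) state the distribution over the $2^k$ patterns encoded in the alias table and identify it with $k$ i.i.d.\ Bernoulli$(p)$ bits; (2) observe that the alias method samples exactly from this distribution, so the sampled integer $c$ has independent bits each equal to $1$ with probability $p$; (3) describe the effect of $c \mathbin{\&} m$ at the bit level, namely that it preserves bits $0,\dots,k'-1$ and zeroes the rest; (4) conclude that for each $i<k'$ the marginal probability that node $i$ is flipped is $\mathbb{P}(b_i=1)=p$, independently of the other nodes. One should also note explicitly that reusing the size-$2^k$ alias table rather than building a fresh size-$2^{k'}$ one is harmless: the extra high-order bits of $c$ carry the ``phantom'' perturbation decisions for $k-k'$ non-existent nodes, and the $\&\,m$ step simply discards them without disturbing the joint law of the retained bits.

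The only real subtlety — and the step I would be most careful about — is justifying step (2), i.e.\ that the product-form weights $(1-p)^{k-w(c)}p^{w(c)}$ really do factor through the bits so that masking yields the correct marginal. This is a one-line computation ($\sum_{c:\,\text{bit }i\text{ of }c=1}(1-p)^{k-w(c)}p^{w(c)} = p$, by summing a binomial expansion over the remaining $k-1$ free bits), but it is the crux: it is exactly the independence of the per-node perturbation events, built into the construction of the full alias table, that survives the truncation. Everything else is bookkeeping about bitwise operations.
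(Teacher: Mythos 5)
Your proposal is correct and follows essentially the same route as the paper: the paper fixes a perturbation pattern on the $k'$ retained nodes and sums the product-form weights over the $2^{k-k'}$ ``phantom'' choices of the discarded bits, which is exactly your marginalization-over-free-bits computation, just phrased for the joint pattern rather than per bit. Your i.i.d.\ Bernoulli reformulation is a clean repackaging (and your binomial sum is stated more carefully than the paper's, which omits the binomial coefficients), but the underlying argument is the same.
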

\begin{proof}
Without loss of generality, we assume that in the last $k'$ nodes, $t$ nodes should be perturbed
and the positions of the $t$ nodes are fixed.
The probability for those $t$ fixed nodes to be perturbed is $p^t(1-p)^{k'-t}$.
When we make a selection from the alias table for $k$ nodes,
there are $2^{k-k'}$ different choices corresponding to the case that $t$ fixed position nodes in the last $k'$ nodes are perturbed.
The sum of the probabilities of the $2^{k-k'}$ different choices is
$[p^t(1-p)^{k'-t}]\cdot\sum_{i=0}^{k-k'}p^{i}(1-p)^{k-k'-i} = p^t(1-p)^{k'-t}$.\qed
\end{proof}

We describe the process for constructing groups and performing perturbations based on the groups in Algorithm~\ref{alg:per}.
Algorithm~\ref{alg:per} requires three inputs: $n$ is the number of nodes,\footnote{
In our methods, it is clear that steps 2 and 3 are independent of step 1.
Thus, we consistently use $n$ to denote the number of nodes in a PBN.}
$k$ is the maximum number of nodes that can be perturbed simultaneously
and $s$ is the PBN's current state which is represented by an~integer.
As perturbing one node equals to flipping one bit of $s$,
perturbing nodes in a group is performed via a~logical bitwise \textit{ exclusive or} operation
(see line~\ref{alg-line:perturbe} of Algorithm~\ref{alg:per}).
Perturbing $k$ nodes simultaneously requires $2^k$ double numbers to store the probabilities of
$2^k$ different choices. The size of $k$ is therefore restricted by the available
memory.\footnote{ For the experiments, we set $k$ to 16 and $k$ could be bigger as long as the
memory allows. However, a~larger $k$ requires larger table to store the $2^k$ probabilities
and the performance of a~CPU drops when accessing an~element of a~much larger table
due to the large cache miss rate.}

\begin{algorithm}[!t]
\caption{The group perturbation algorithm}
\label{alg:per}
\begin{algorithmic}[1]
\Procedure{PreparePerturbation}{$n, k$}
	\State $g=\lceil n/k \rceil$; ~$k=\lceil n/g \rceil$;~~$k'=n-k*(g-1)$;
    \State construct the alias table $A_p$ and $mask$;
    \State return $[A_p, mask]$.
\EndProcedure
\Procedure{Perturbation}{$A_p, mask, s$}
    \State $i=0$;~~${\it perturbed=false}$;
    \Repeat
    \State $c=Next(A_p)$; \hfill{\it //$Next(A_p)$ returns a random integer based on $A_p$}
    \If {$c!=0$}
    \State $s=s\oplus(c<\!\!\!<(i*k))$; \hfill{\it //shift c to flip only the bits (nodes) of current group}
    \label{alg-line:perturbe}
    \State ${\it perturbed=true}$;
    \EndIf
    \State $i++$;
	\Until $i=g-1$
	\State $c=Next(A_p)\ \&\ mask$;
	\If {$c!=0$}
    \State $s=s\oplus(c<\!\!\!<(i*k))$; ~~${\it perturbed=true}$;
    \EndIf
    \State return $[s, perturbed]$.
\EndProcedure
\end{algorithmic}
\end{algorithm}

\subsection{Updating nodes in parallel}
\label{ssec:cpf}
The last step to speed up PBN simulation is to update a number of nodes simultaneously in
accordance with their predictor functions.
For this step, we need an initialisation process to divide the $n$ nodes
into $m$ groups and compute the combined predictor functions for each group.
After this initialisation, we can select a combined predictor function for each group
based on a sampled random number and apply this combined function to update the nodes in
the group simultaneously.

We first describe how predictor functions of two nodes are combined.
The combination of functions for more than two nodes can be performed iteratively.
Let $x_{\alpha}$ and $x_{\beta}$ be the two nodes to be considered.
Their predictor functions are denoted as
$F_{\alpha}=\{f_1^{({\alpha})}, f_2^{({\alpha})}, \dots, f_{\ell({\alpha})}^{({\alpha})}\}$
and $F_{\beta}=\{f_1^{({\beta})}, f_2^{({\beta})}, \dots, f_{\ell(\beta)}^{({\beta})}\}$.
The corresponding selection probabilities are denoted as
$C_{\alpha}=\{c_1^{({\alpha})},c_2^{({\alpha})},\dots,c_{\ell(\alpha)}^{{\alpha}}\}$
and $C_{\beta}=\{c_1^{({\beta})},c_2^{({\beta})},\dots,c_{\ell({\beta})}^{{\beta}}\}$.
After the grouping,
the number of combined predictor functions is  $\ell(\alpha) * \ell(\beta).$
We denote the set of combined predictor functions as
$\bar{F}_{\alpha\beta}=
\{f_1^{({\alpha})}\cdot f_1^{({\beta})}, f_1^{({\alpha})}\cdot f_2^{({\beta})}, \dots, f_{\ell({\alpha})}^{({\alpha})}\cdot f_{\ell({\beta})}^{({\beta})}\}$,
where for $i \in [1,\ell(\alpha)]$ and $j \in [1,\ell(\beta)]$,
$f_i^{({\alpha})}\cdot f_j^{({\beta})}$ is a combined predictor function
that takes the input nodes of functions $f_i^{({\alpha})}$ and $f_j^{({\beta})}$ as its input and
combines the Boolean output of functions $f_i^{({\alpha})}$ and $f_j^{({\beta})}$ into integers as output.
The combined integers range in $[0,3]$ and their 2-bit binary representations (from right to left) represent the values of nodes $x_{\alpha}$ and $x_{\beta}$.
The selection probability for function $f_i^{({\alpha})}\cdot f_j^{({\beta})}$ is $c_i^{(\alpha)}*c_j^{(\beta)}$.
It holds that $\sum_{i=1}^{\ell(\alpha)}\Sigma_{j=1}^{\ell(\beta)}(c_i^{(\alpha}*c_j^{(\beta)})=1$.
With the selection probabilities, we can compute the alias table for each group so that the
selection of combined predictor function in each group can be performed in constant time.

We now describe how to divide the nodes into groups.
Our aim is to have as few groups as possible so that the updating
of all the nodes can be finished in as few rounds as possible.
However, fewer groups lead to many more nodes in a group,
which will result in a huge number of combined predictor functions in the group.
Therefore, the number of groups has to been chosen properly so that
the number of groups is as small as possible,
while the combined predictor functions can be stored within the memory limit
of the computer performing the simulation.
Besides, nodes with only one predictor function should be considered separately
since selections of predictor functions for those nodes are not needed.
In the rest of this section,
we first formulate the problem for dividing nodes with more than one predictor function
and give our solution afterwards;
then we discuss how to treat nodes with only one predictor function.

\medskip\noindent\textbf{Problem description.}
Let $S$ be a list of $n$ items $\{\mu_1,\mu_2,\dots,\mu_n\}$.
For $i \in [1,n]$,  item $\mu_i$ represents a node in a PBN with $n$ nodes
and its weight is assigned by a function $\omega(\mu_i)$,
which returns the number of predictor functions of node $\mu_i$.
We aim to find a minimum integer $m$ to distribute the nodes into $m$ groups
such that the sum of the combined predictor functions numbers of the $m$ groups will not exceed
a~memory limit $\theta$.
This is equivalent to finding a~minimum $m$ and an~$m$-partition $S_1, S_2, \dots S_m$ of $S$,
i.e., $S=S_1 \cup S_2 \cup \dots \cup S_m$ and $S_k \cap S_\ell =\emptyset$ for $k,\ell \in \{1,2,\dots,m\}$,
such that $\sum_{i=1}^{m}\left(\Pi_{\mu_j \in S_i} \omega(\mu_j)\right) \leq \theta$.

\medskip\noindent\textbf{Solution.}
The problem in fact has two outputs:
an integer $m$ and an $m$-partition.
We first try to estimate a potential value of $m$,
i.e., the lower bound of $m$ that could lead to an $m$-partition of $S$
which satisfies $\sum_{i=1}^{m}\left(\Pi_{\mu_j \in S_i} \omega(\mu_j)\right)  \leq \theta$.
With this estimate, we then try to find an $m$-partition satisfying the above requirements.

Denote the weight of a sub-list $S_i$ as $w_i$
and $w_i=\Pi_{\mu_j \in S_i} \omega(\mu_j).$
The inequality in the problem description can be rewritten as $\sum_{i=1}^{m}w_i \leq \theta$.
We first compute the minimum value of $\hat{m}$ satisfying the following inequality and denote this minimum value as $\hat{m}_{min}$.
\begin{equation}
\label{average}
\hat{m}\cdot \sqrt[\hat{m}]{\Pi_{i=1}^{n}\omega(\mu_i)} \leq \theta.
\end{equation}
\begin{theorem}
\label{theo:min_inequlity}
$\hat{m}_{min}$ is the lower bound of $m$ that allows a partition to satisfy $\sum_{i=1}^{m}w_i \leq \theta$.
\end{theorem}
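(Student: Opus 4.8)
The plan is to derive the bound from the arithmetic-mean--geometric-mean (AM-GM) inequality together with the observation that the product of the group weights does not depend on the partition. First I would set $P=\prod_{i=1}^{n}\omega(\mu_i)$ and note that for \emph{any} $m$-partition $S_1,\dots,S_m$ of $S$, every node $\mu_j$ lies in exactly one block $S_i$, so
\[
\prod_{i=1}^{m} w_i \;=\; \prod_{i=1}^{m}\ \prod_{\mu_j\in S_i}\omega(\mu_j) \;=\; \prod_{j=1}^{n}\omega(\mu_j) \;=\; P .
\]
Hence the geometric mean of $w_1,\dots,w_m$ always equals $P^{1/m}$, whatever the partition.

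Next I would apply AM-GM to the positive reals $w_1,\dots,w_m$ (positivity holds since $\omega(\mu_j)\ge 1$, in fact $\ge 2$ for the nodes considered here), obtaining
\[
\sum_{i=1}^{m} w_i \;\ge\; m\cdot\Bigl(\prod_{i=1}^{m} w_i\Bigr)^{1/m} \;=\; m\cdot\sqrt[m]{P}\, .
\]
Consequently, if some $m$-partition satisfies $\sum_{i=1}^{m} w_i \le \theta$, then necessarily $m\cdot\sqrt[m]{P}\le\theta$, i.e., $m$ itself satisfies Inequality~\eqref{average}. Since $\hat{m}_{min}$ is by definition the smallest integer $\hat{m}$ for which Inequality~\eqref{average} holds, any feasible $m$ must satisfy $m\ge\hat{m}_{min}$. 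This is exactly the claim that no partition into fewer than $\hat{m}_{min}$ groups can respect the memory bound $\theta$.

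The remaining points are bookkeeping rather than conceptual. One must use explicitly that $S_1,\dots,S_m$ is a genuine partition (pairwise disjoint and covering $S$) so that the weight product collapses to $P$; and one should observe that $\hat{m}\mapsto\hat{m}\sqrt[\hat{m}]{P}$ need not be monotone --- it is U-shaped (first decreasing, then increasing) --- so the set of $\hat{m}$ satisfying Inequality~\eqref{average} is an interval whose left endpoint is $\hat{m}_{min}$. The argument above only needs that $m$ belongs to this set, which already forces $m\ge\hat{m}_{min}$; if the set is empty the statement is vacuous, and otherwise the feasibility of some $m$ makes it nonempty, so $\hat{m}_{min}$ is well defined. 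I expect the AM-GM step to be the crux; the partition-invariance of the product and the monotonicity caveat are straightforward to handle.
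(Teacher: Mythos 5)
Your proof is correct and rests on essentially the same ingredients as the paper's own argument: the partition-invariance of the product of the sub-list weights, the AM--GM inequality, and the minimality of $\hat{m}_{min}$ with respect to Inequality~\eqref{average}. You merely phrase it in contrapositive form (any feasible $m$ must itself satisfy Inequality~\eqref{average}, hence $m\ge\hat{m}_{min}$), while the paper argues directly that every partition into $\hat{m}_{min}-1$ sub-lists has total weight exceeding $\theta$; the substance is the same.
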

\begin{proof}
This is equivalent to proving that
\begin{equation}
\label{min_inequlity}
\sum_{i=1}^{\hat{m}_{min}-1}w_i^{'} > \theta,
\end{equation}
 where $w_{i}^{'}$ is the weight of the $i$th sub-list
in an arbitrary partition of $S$ into $\hat{m}_{min}-1$ sub-lists.
Since $\hat{m}_{min}$ is the minimum value of $\hat{m}$ that satisfies Inequality~\eqref{average},
we have $(\hat{m}_{min}-1)\cdot \sqrt[(\hat{m}_{min}-1)]{\Pi_{i=1}^{n}\omega(\mu_i)} > \theta$.
Hence,
\begin{equation}
\label{min_inequlity2}
(\hat{m}_{min}-1)\cdot \sqrt[(\hat{m}_{min}-1)]{\Pi_{i=1}^{\hat{m}_{min}-1}w_i^{'}} > \theta.
\end{equation}
Based on the inequality of arithmetic and geometric means, 
we have
\begin{equation}
\label{arithmetic}
\sum_{i=1}^{\hat{m}_{min}-1}w_i^{'} \geq (\hat{m}_{min}-1)\cdot \sqrt[(\hat{m}_{min}-1)]{\Pi_{i=1}^{\hat{m}_{min}-1}w_i^{'}}.
\end{equation}
Inequality~\eqref{min_inequlity} follows from Inequality~\eqref{min_inequlity2} and Inequality~\eqref{arithmetic}.\qed
\end{proof}


\begin{algorithm}[!t]
\caption{The greedy algorithm}
\label{alg:greedy}
\begin{algorithmic}[1]
\Procedure{FindPartitions}{$S, m$}
	\State sort $S$ with descending orders based on the weight of items in $S$;
    \State initialise $A$, an array of $m$ lists;~ $j=0$;\hfill{\it //initially, each $A[i]$ is an empty list }
    \Repeat \hfill{\it //$S[j]$ ($j$ starts from 0) is the $j$th item in $S$ }
	\State among the $m$ elements of $A$,
	\State find the one with the smallest total weight and add $S[j]$ to it;
	\State$j++$;
	\Until{$j=S.size()$} \hfill{\it //$S.size()$ returns the number of items in $S$ }
	\State return $A$.
\EndProcedure
\end{algorithmic}
\end{algorithm}

Starting from the lower bound, we try to find a partition of $S$ into $m$ sub-lists that satisfies $\sum_{i=1}^{m}w_i \leq \theta$.
Since the arithmetic and geometric means of non-negative real numbers
are equal if and only if every number is the same,
we get the heuristic that the weight of the $m$ sub-lists should be as equal as possible
so that the sum of the weights is as small as possible.
Our problem then becomes similar to the NP-hard multi-way number partition problem:
to divide a given set of integers into a collection of subsets,
so that the sum of the numbers in each subset are as nearly equal as possible.
We adapt the greedy algorithm (see Algorithm~\ref{alg:greedy} for details)
for solving the multi-way number partition problem,
by modifying the sum to multiplication, in order to solve our partition problem.\footnote{
There exist other algorithms to solve the multi-way number partition problem,
and we choose the greedy algorithm for its efficiency.}
%
%
If the $m$-partition we find satisfies the requirement $\sum_{i=1}^{m}w_i \leq \theta$,
then we get a solution to our problem.
Otherwise, we need to increase $m$ by one and try to find a new $m$-partition.
We repeat this process until the condition $\sum_{i=1}^{m}w_i \leq \theta$ is satisfied.
The whole partition process for all the nodes is described in Algorithm~\ref{alg:partition}.

Nodes with only one predictor function are treated in line~\ref{alg-line:one}.
We divide such nodes into groups based on their parent nodes,
i.e., we put nodes sharing the most common parents into the same group.
In this way, the combined predictor function size can be as small as possible such that
the limited memory can handle more nodes in a group.
The number of nodes in a group is also restricted by the combined predictor function size,
i.e., the number of parent nodes in this group.\footnote{
In our experiments, the maximum number of parent nodes in one group is set to 18.
Similar to the value of $k$ in step 2, the number can be larger as long as the memory can handle.
However, the penalty from large cache miss rate will diminish the benefits by having fewer groups
when the number of parent nodes is too large.}
The partition is performed with an algorithm similar to Algorithm~\ref{alg:greedy}.
The difference is that in each iteration we always add a node into a group which shares most common parent nodes with this node.

\begin{algorithm}[!t]
\caption{Partition $n$ nodes into groups.}
\label{alg:partition}
\begin{algorithmic}[1]
\Procedure{Partition}{$G, \theta$}
	\State compute two lists $S$ and $S'$ based on $G$; \hfill{\it //$S'$ contains nodes with one predictor function}
	\State compute the lower bound $\hat{m}$; ~ $m=\hat{m}$;
	\label{alg-line:seperate}
	\Repeat
	\State $A_1=$ {\sc FindPartitions}$(S,m)$;
	\State compute the sum of the weight of the $m$ partitions and assign it to $sum$;
	\State $m=m+1$;
	\Until $sum <\theta$
	\State divide $S'$ into $A_2$; \hfill{\it //partition nodes with only one predictor function}
	\label{alg-line:one}	
	\State merge $A_1$ and $A_2$ into $A$;
	\State return $A$.
\EndProcedure
\end{algorithmic}
\end{algorithm}

\subsection{The new simulation method}
\label{ssec:method}
We describe our new method for simulating PBNs in Algorithm~\ref{alg:slp}.
The procedure {\sc Preparation} describes the whole preparation process of the three steps
(network reduction for Step~1, and node-grouping for Step~2 and Step~3). The four inputs of the
procedure {\sc Preparation} are respectively the PBN network $G$,
the memory limit $\theta$, the maximum number $k$ of nodes that can be put in a group for perturbation and the maximum number of parent nodes in a group.
The {\sc Preparation}  procedure takes these inputs,
performs network reduction and node grouping.
The reduced network and the grouped nodes information are then provided
for the {\sc ParallelSimulation} procedure via six parameters:	
$A_p$ and $mask$ are the alias table and mask used for performing perturbations of non-leaf nodes
as explained in Algorithm~\ref{alg:per}; $t$ is used for checking perturbations in leaf nodes as
explained in Algorithm~\ref{alg:checkleave};
$A$ is an array containing the alias tables for predictor functions in all groups;
$F$ is an array containing predictor functions of all groups;
and $cum$ is an array storing the cumulative number of nodes in each group.
Perturbations for leaf nodes and non-leaf nodes have been explained in
Algorithms~\ref{alg:checkleave} and \ref{alg:per}.
We now explain how nodes in a~group are simultaneously updated with combined predictor function.
It is performed via the following three steps:
 1) a random combined predictor function is selected from $F$ based on the alias table $A$;
 2) the output of the combined predictor function is obtained according to the current state $s$;
 3) the nodes in this group are updated based on the output of the combined predictor function.
To save memory, states are stored as integers and updating a~group of nodes is implemented via
a~logical bitwise \textit{or} operation. To guarantee that the update is performed on the required
nodes, a~shift operation is needed on the output of the selected function (line~\ref{alg-line:cum}).
The number of bits to be shifted for the current group is in fact the cumulative number of nodes of all its previous groups,
which is stored in the array $cum$.

%

\begin{algorithm}[!t]
\caption{Structure-based PBN simulation.}
\label{alg:slp}
\begin{algorithmic}[1]
\Procedure{Preparation}{$G,\theta, k$}
		\State perform network reduction for $G$ and store the reduced network in $G'$
		\State get the number of nodes $n$ and perturbation rate $p$ from $G$;
		\State get the number of nodes $n'$ from $G'$; ~$t=pow(1-p,n-n');$
		\State $[A_p, mask]=${\sc PreparePerturbation}$(n',k)$;
		\State $PA=${\sc Partition}$(G',\theta)$;
		\State for each group in $PA$, compute its combined functions $F$, and its alias table $A$,
		\State and the cumulative number of nodes in the group $cum$;
		\State return $[A_p, mask, t, A, F, cum]$.
\EndProcedure
\Procedure{ParallelSimulation}{$A_p, mask, A, F, cum, t, s$}
	\State $[s, {\it perturbed}]=${\sc Perturbation}$(A_p,mask,s)$; \hfill{\it //perform perturbations by group}
	\label{alg-line:perturbaion}
    \If{${\it perturbed}~ ||$ {\sc CheckLeafNodes}$(t)$} ~~return $s$; \hfill{\it //check perturbations of leaves}
    \label{alg-line:checkleaf}
    \Else~~$i=0$; $s'=0;$ ${\it count=size(A)};$ \hfill{\it //$size(A)$ returns the number of elements in array $A$}
    \Repeat
    \State $index=Next(A[i]);$ \hfill{\it //select a random integer based on the alias table of group i}
    \State$f=F.get({\it index})$; \hfill{\it //obtain the predictor function at the given index}
    \label{alg-line:start}
    \State $v=f[s]$; \hfill{\it //$f[s]$ returns the integer output of $f$ based on state $s$}
    \State $s'=s'~|~(v<\!\!\!<cum[i]);$ \hfill{\it //shift $v$ to update only nodes in the current group}
    \label{alg-line:cum}
    \State $i++$;
    \Until $i={\it count-1}$;
    \EndIf
	\State return $s'$.
\EndProcedure
\end{algorithmic}
\end{algorithm}

\section{Evaluation}
\label{sec:evaluation}
The evaluation of our new simulation method is performed on both randomly generated networks and a real-life biological network.
All the experiments are performed on a high performance computing (HPC) machines,
each of which contains a CPU of Intel Xeon X5675 @ 3.07 GHz.
The program is written in Java and the initial and maximum Java virtual machine heap size is set to 4GB and 5.89GB, respectively.
The evaluation data is available at \url{http://satoss.uni.lu/software/ASSA-PBN/benchmark}.
\subsection{Randomly generated networks}
\label{ssec:ernd}
With the evaluation on randomly generated networks, we aim not only to show the efficiency of our
methods, but also to answer how much speedup our method is likely to provide for a~given PBN.

The first step of our new simulation method performs a~network reduction technique,
which is different from the node-grouping technique in the later two steps.
Therefore, we evaluate the contribution of the first step and the other two steps to the performance
of our new simulation method separately.
We name the original simulation method as {\sf $Method_{old}$};
the simulation method applying the network reduction technique as {\sf $Method_{reduction}$};
and the simulation method applying both the network reduction and node-grouping techniques as {\sf $Method_{new}$}.
{\sf $Method_{reduction}$} and {\sf $Method_{new}$} require pre-processing of the PBN under study,
which leads to a~certain computational overhead. However, the proportion of the pre-processing
time in the whole computation decreases with the increase of the sample size. In our evaluation,
we first focus on comparisons without taking pre-processing into account to evaluate the maximum
potential performance of our new simulation method;
we then show how different sample sizes will affect the performance when pre-processing is
considered.

\medskip\noindent\textbf{How does our method perform?}
Intuitively, the speedup due to the network reduction technique is influenced by how much
a~network can be reduced and the performance of node-grouping is influenced by both the density
and size of a~given network.
Hence, the evaluation is performed on a~large number of randomly
generated PBNs covering different types of networks.
In total, we use 2307 randomly generated PBNs with different percentages of leaves ranging between
0\% and 90\%; different densities ranging between 1 and 8.1; and different network sizes from the
set $\{20, 50, 100, 150, 200, 250, 300, 350, \\400, 450, 500, 550, 600, 650, 700, 750, 800, 850,
900, 950, 1000\}$. We simulate 400 million steps for each of the 2307 PBNs with the three
different simulation methods and compare their time costs. For the network reduction technique
the speedups are computed as the ratio between the time of {\sf $Method_{reduction}$} and the time
of {\sf $Method_{old}$}. The obtained speedups are between 1.00 and 10.90. For node-grouping,
the speedups are computed as the ratio between the time of {\sf $Method_{new}$} and the time of
{\sf $Method_{reduction}$}. We have obtained speedups between 1.56 and 4.99. We plot in
Figure~\ref{fig:speedups} the speedups of the network reduction and node-grouping techniques
with respect to their related parameters.
For the speedups achieved with network reduction, the related parameters are  the percentage of leaves and the density.
In fact, there is little influence from density to the speedup resulting from network reduction
as the speedups do not change much with the different densities (see Figure~\ref{fig:speedup1}).
The determinant factor is the percentage of leaves.
The more leaves a PBN has, the more speedup we can obtain for the network.
For the speedups obtained from node-grouping, the related parameters are the density and the network size after network reduction,
i.e., the number of non-leave nodes.
Based on Figure~\ref{fig:speedup2}, the speedup with node-grouping is mainly determined by the
network density: a~smaller network density could result in a~larger speedup contributed from the
node-grouping technique.
This is mainly due to the fact that sparse network has a relatively small number of predictor functions in each node and therefore, the nodes will be partitioned into fewer groups.
Moreover, while the performance of network reduction is largely
influenced by the percentage of leaves, the node-grouping technique tends to provide a~rather
stable speedup. Even for large dense networks, the technique can reduce the time cost almost by
half.

The combination of these two techniques results in speedups (time of {\sf $Method_{new}$} over
time of {\sf $Method_{old}$}) between 1.74 and 41.92. We plot in Figure~\ref{fig:speedup3} the
speedups in terms of the percentage of leaves and density. The figure shows a~very good
performance of our new method on sparse networks with large percentage of leaves.
\begin{figure}[!t]
  \centering
  \begin{subfigure}[b]{0.48\textwidth}
    \centering
    \includegraphics[scale=0.31]{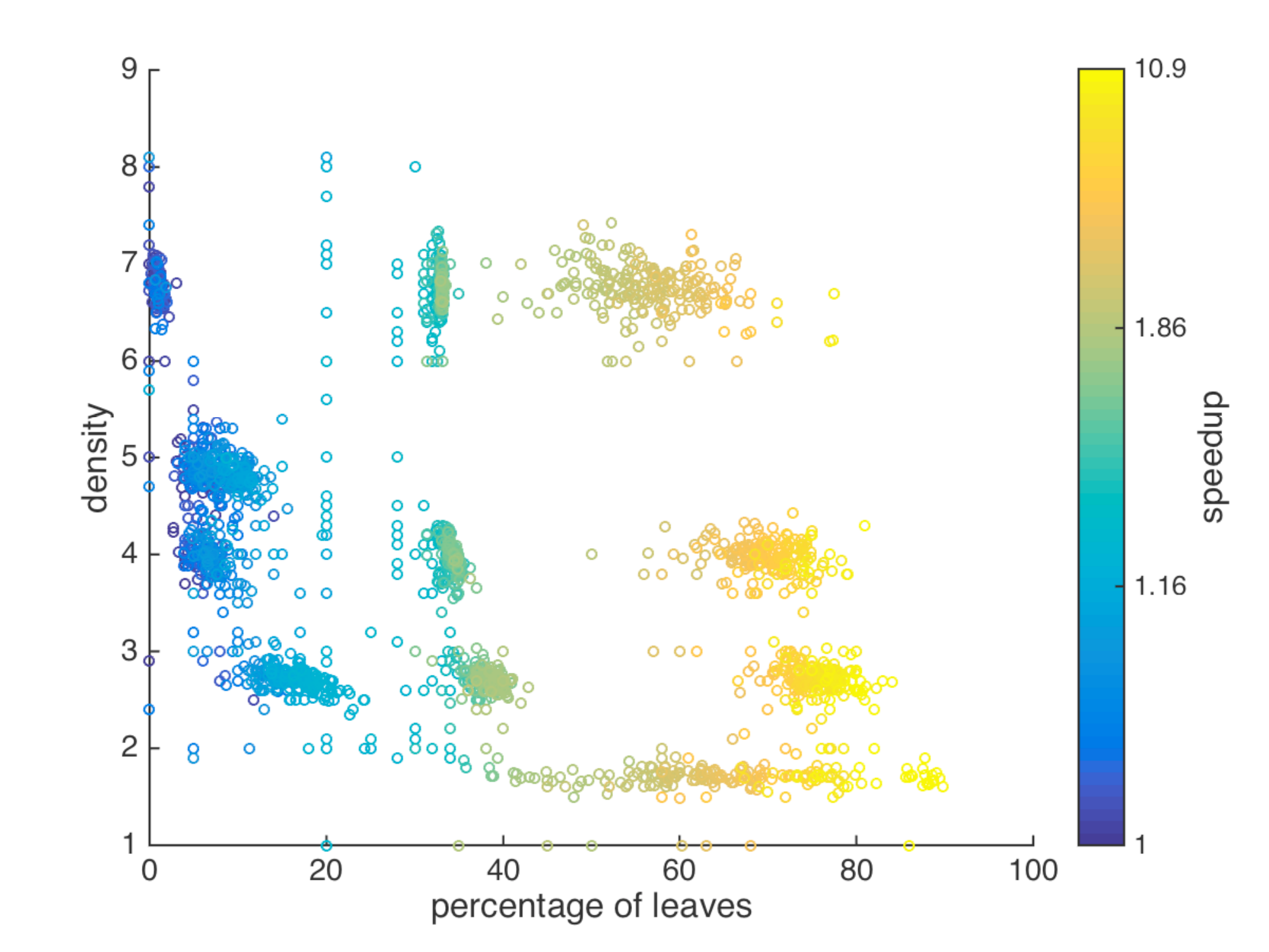}
    \caption{ {\sf $Method_{reduction}$} over {\sf $Method_{old}$}.}
    \label{fig:speedup1}
  \end{subfigure}%
  \quad
  \begin{subfigure}[b]{0.48\textwidth}
    \centering
    \includegraphics[scale=0.31]{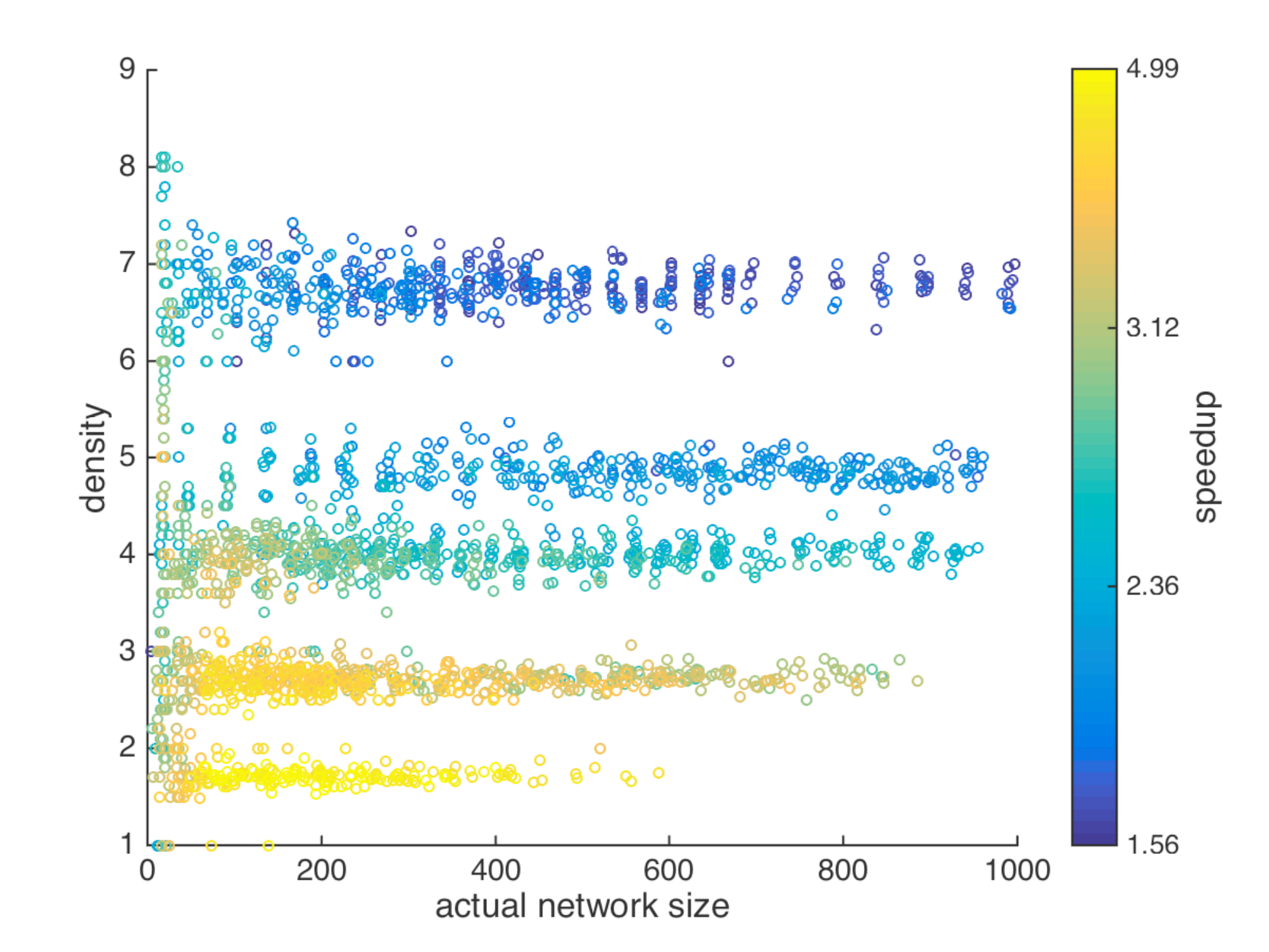}
    \caption{{\sf $Method_{new}$} over {\sf $Method_{reduction}$}.}
    \label{fig:speedup2}
  \end{subfigure}
  \caption{Speedups contributed from network reduction and node-grouping.}
  \label{fig:speedups}
\end{figure}

\begin{figure}[!t]
\centering
\includegraphics[width=0.58\textwidth]{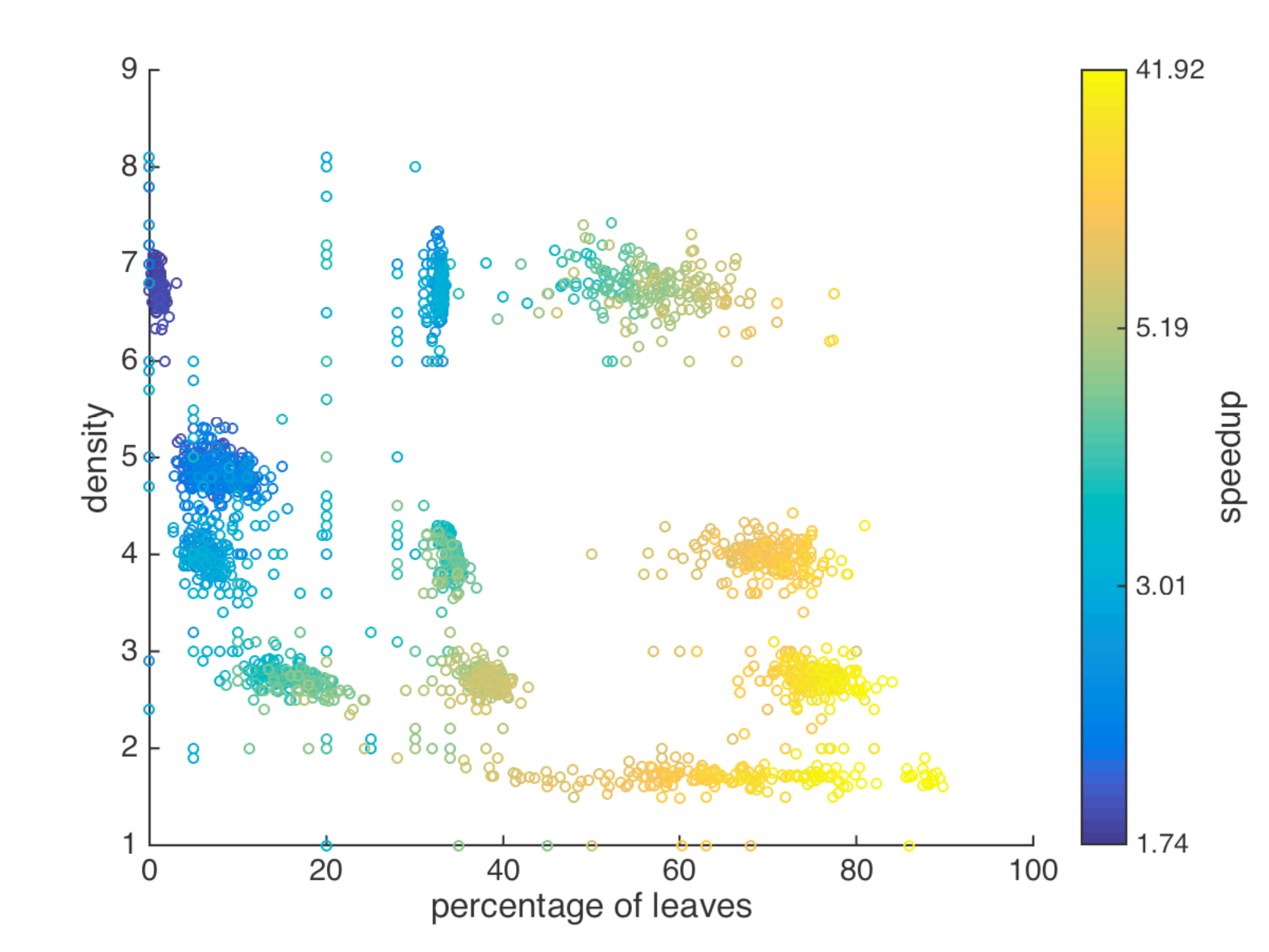}
\caption{Speedups of {\sf $Method_{new}$} over {\sf $Method_{old}$}.}
\label{fig:speedup3}
\end{figure}

\medskip\noindent\textbf{What is the influence of sample size?}
We continue to evaluate the influence of sample size on our proposed new PBN simulation method.
The pre-processing time for the network reduction step is relatively very small;
therefore our evaluation focuses on the influence to the total time cost of all the 3 steps,
i.e., the speedup of {\sf $Method_{new}$} with respect to {\sf $Method_{old}$}.
We selected 9 representative PBNs from the above 2307 PBNs,
with respect to their densities, percentages of leaves and the speedups we have obtained.
We simulate the 9 PBNs for different sample size
using both {\sf $Method_{old}$} and {\sf $Method_{new}$}.
We show the average pre-processing time of {\sf $Method_{new}$}
and the obtained speedups of {\sf $Method_{new}$} (taking into account pre-processing time costs)
with different sample sizes in Table~\ref{tab:influence}.
As expected, with the increase of the sample size, the influence of pre-processing time becomes
smaller and the speedup increases. In fact, in some cases, the pre-processing time is relatively
so small that its influence becomes negligible, e.g., for networks 7 and 8 when the sample size is
equal or greater than 100 million.
Moreover, often with a sample size larger than 10 million,
the effort spent in pre-processing
can be compensated by the saved sampling time (simulation speedup).

\begin{table}[!t]
\centering
\begin{tabular}{|r|r|r|r|r|r|r|r|r|}
\hline
\multicolumn{1}{|c|}{\multirow{3}{*}{network \#}} & \multicolumn{1}{c|}{\multirow{3}{*}{size}} & \multicolumn{1}{c|}{\multirow{3}{*}{\begin{tabular}[c]{@{}c@{}}~percentage~\\ of leaves\end{tabular}}} & \multicolumn{1}{c|}{\multirow{3}{*}{~density~}} & \multicolumn{1}{c|}{\multirow{3}{*}{\begin{tabular}[c]{@{}c@{}}~pre-processing~\\ time (second)\end{tabular}}} & \multicolumn{4}{c|}{\multirow{2}{*}{\begin{tabular}[c]{@{}c@{}}speedup with different\\ sample sizes (million)\end{tabular}}} \\
\multicolumn{1}{|c|}{}                    & \multicolumn{1}{c|}{}                      & \multicolumn{1}{c|}{}                                                                             & \multicolumn{1}{c|}{}                         & \multicolumn{1}{c|}{}                                                                                & \multicolumn{4}{c|}{}                                                                                                      \\ \cline{6-9}
\multicolumn{1}{|c|}{}                    & \multicolumn{1}{c|}{}                      & \multicolumn{1}{c|}{}                                                                             & \multicolumn{1}{c|}{}                         & \multicolumn{1}{c|}{}                                                                                & \multicolumn{1}{c|}{1}      & \multicolumn{1}{c|}{10}      & \multicolumn{1}{c|}{100}      & \multicolumn{1}{c|}{400}      \\ \hline\hline
1                                         & 900                                        & 1.11                                                                                              & 6.72                                          &                                                   28.12                                                   & 0.65                       & 1.49                        & 1.71                         & 1.73                          \\ \hline
2                                         & 950                                        & 0.84                                                                                              & 6.96                                          &                                       32.35                                                               & 0.59                        &        1.47                    &         1.73 & 1.75                          \\ \hline
3                                         & 1000                                       & 0.30                                                                                              & 7.00                                          &     33.72                                                                                                 & 0.58                       & 1.45                         & 1.71                         & 1.73                           \\ \hline
4                                         & 600                                        & 67.83                                                                                             & 4.25                                          &                           162.21                                                                           & 0.13                        & 1.08                         & 4.51                          & 6.89                          \\ \hline
5                                         & 800                                        & 68.38                                                                                             & 3.94                                          &                                                                   43.17                                   & 0.66                        & 3.05                         &         6.75                      & 7.69                          \\ \hline
6                                         & 900                                        & 68.00                                                                                             & 3.89                                          &                                           36.58                                                           & 0.69                        & 3.56                         & 6.90                          & 7.70                          \\ \hline
7                                         & 450                                        & 89.78                                                                                             & 1.60                                          &                       0.23                                                                               & 21.44                       & 37.59                        & 41.62                         & 41.84                         \\ \hline
8                                         & 550                                        & 88.55                                                                                             & 1.72                                          &                                  0.24                                                                    & 20.26                       & 35.94                        & 36.47                         & 36.62                         \\ \hline
~~~~~~9                      & ~~~~1000                                       & 89.10                                                                                             & 1.75                                          &                               1.08                                                                       & ~~~~10.04                       & ~~~~31.83                        &        ~~~~35.09                       & ~~~~37.19                         \\ \hline
\end{tabular}
\caption{Influence of sample sizes on the speedups of {\sf $Method_{new}$} over {\sf $Method_{old}$}.}
\label{tab:influence}
\end{table}

\medskip\noindent\textbf{Performance prediction.}
To predict the speedup of our method for a given network,
we apply regression techniques on the results of the 2307 PBNs to fit a prediction model.
We use the normalised percentage of leaves and the network density
as the predictor variables and the speedup of {\sf $Method_{new}$} over {\sf $Method_{old}$}
as the response variables in the regression model since network size does not directly affect the speedup based on the plotted pictures.
In the end, we obtained a polynomial regression model shown in Equation~\ref{equ:predict}, which can fit $90.9\%$ of the data:
\begin{equation}
y=b_1+b_2*x_1+b_3*x_1^2+b_4*x_2+b_5*x_2^2,
\label{equ:predict}
\end{equation}
where $[b_1,b_2,b_3,b_4,b_5]=[2.89, 2.71, 2.40, -1.65, 0.71]$, $y$ represents the speedup,
$x_1$ represents the percentage of leaves and $x_2$ represent the network density.
The result of a 10-fold cross-validation of this model supports this prediction rate;
hence we believe this model does not overfit the given data.
Based on this model, we can predict how much speedup is likely to be obtained with our proposed method for a given PBN.

\subsection{An apoptosis network}
\label{ssec:erbn}
In this section,
we evaluate our method on a real-life biological network,
i.e., an apoptosis network containing 96 nodes~\cite{SSVSSBEMS09}.
This network contains $37.5\%$ of leaves and has a density of 1.78,
which is suitable for applying our method to gain speedups.
The apoptosis network has been analysed in~\cite{MPY15}.
In one of the analyses, i.e., the long-term influences~\cite{SDKZ02} on complex2 from
each of its parent nodes: RIP-deubi, complex1, and FADD,
7 steady-state probabilities of the apoptosis network need to be computed.
In this evaluation, we compute the 7 steady-state probabilities using
our proposed structure-based simulation method ({\sf $Method_{new}$})
and compare it with the original simulation method ({\sf $Method_{old}$}).
The precision and confidence level of all the computations,
as required by the two-state Markov chain approach~\cite{RL92}, are set to $10^{-5}$ and 0.95, respectively.
The results of this computation are shown in Table~\ref{tab:case-study}.
The sample sizes required by both methods are very close for computing same steady-state
probabilities.
Note that the speedups are computed based on the accurate data, which are slightly different from
the truncated and rounded data shown in the table.
We have obtained speedups ({\sf $Method_{new}$} over {\sf $Method_{old}$}) between 7.67 and 10.28 for computing those 7 probabilities.
In total, the time cost is reduced from 1.5 hours to about 10 mins.
\section{Discussion and Conclusion}
\label{sec:conclusion}
In this work, we propose a structure-based method for speeding up simulations of PBNs.
Using network reduction and node-grouping techniques,
our method can significantly improve the simulation speed of PBNs.
We show with experiments that our new simulation method is especially efficient in the case of
analysing sparse networks with a~large number of leaf nodes.

The node-grouping technique gains speedups by using more memory. Theoretically, as long as the
memory can handle, the group number can be made as small as possible. However, this causes two
issues in practice. First, the pre-processing time increases dramatically with the group number
decreasing. Second, the performance of the method drops a~lot when operating on large memories due
to the increase of cache miss rate. Therefore, in our experiments we do not explore all the
available memory to maximise the groups. Reducing the pre-processing time cost and the cache miss
rate would be two future works to further improve the performance of our method.
We also plan to apply our method for the analysis of real-life large biological networks.

\begin{table}[!t]
\centering
\begin{tabular}{|c|r|r||r|r|r||r|}
\hline
{\multirow{3}{*}{probability \#}} &\multicolumn{2}{c||}{{\sf $Method_{old}$}}                                                                                                   & \multicolumn{3}{c||}{{\sf $Method_{new}$}}                                                                                                 & \multicolumn{1}{c|}{\multirow{3}{*}{~speedup~}} \\ \cline{2-6}
 & \multicolumn{1}{c|}{\begin{tabular}[c]{@{}c@{}}~sample size~\\ (million)\end{tabular}} & \multicolumn{1}{c||}{time (minute)} & \multicolumn{1}{c|}{\begin{tabular}[c]{@{}c@{}}pre-processing \\time (second) \end{tabular}} & \multicolumn{1}{c|}{\begin{tabular}[c]{@{}c@{}}sample size\\ (million)\end{tabular}} & \multicolumn{1}{c||}{\begin{tabular}[c]{@{}c@{}}total\\ time (minute)\end{tabular}} & \multicolumn{1}{c|}{}                          \\ \hline\hline
1           & 147.50	                                                                               & 9.51	                         & 4.57           & 147.82	                                                                              &1.05	                      & 9.09                                      \\ \hline
2           & 452.35	                                                 & 28.65	     & 3.10           & 452.25	   &2.79	 & 10.28                                   \\ \hline
3           &253.85	                 & 14.88	                     & 3.42          & 253.99	       & 1.74	                & 8.54       \\ \hline
4           & 49.52	    & 2.96	             & 3.38           &50.39	                                                                        & 0.36	 & 8.31       \\ \hline
5           & 315.06	                                                                   &17.73	            & 4.40         &305.43    & 	2.05 & 	8.39       \\ \hline
6           & 62.22	                                      & 3.69	         & 3.13          & 50.28	           & 0.39	            & 7.67     \\ \hline
7           & 255.88	                  &16.74	              & 4.01           & 256.61	      & 1.70	             & 9.88       \\ \hline
\end{tabular}
\caption{Performance of {\sf $Method_{old}$} and {\sf $Method_{new}$} on an apoptosis network.}
\label{tab:case-study}
\end{table}

\bibliographystyle{splncs}
\bibliography{structure}

\end{document}